\author{
  Aviad Rubinstein\thanks{\texttt{aviad@cs.stanford.edu}. Stanford University. Most of the work done while being a Rabin Postdoc at Harvard University.}\\
  \and
  Zhao Song\thanks{\texttt{magic.linuxkde@gmail.com}. Simons at Berkeley. Most of the work done while visiting Harvard University and hosted by Jelani Nelson.}\\
}
\date{}
\title{Reducing approximate Longest Common Subsequence to approximate Edit Distance\thanks{The authors would like to thank Alexandr Andoni, Arturs Backurs, Ilya Razenshteyn, Saeed Seddighin, Erik Waingarten for encouraging us to release this paper. The authors would also like to thank Lijie Chen and Rasmus Kyng for useful discussions. The authors would like to thank for Lijie Chen and Zhengyu Wang for proof-reading.}}
\newtheorem{theorem}{Theorem}[section]
\newtheorem*{theorem*}{Theorem}
\newtheorem{lemma}[theorem]{Lemma}
\newtheorem{definition}[theorem]{Definition}
\newtheorem{corollary}[theorem]{Corollary}
\newtheorem*{corollary*}{Corollary}
\newtheorem{fact}[theorem]{Fact}
\theoremstyle{remark}
\newtheorem*{remark*}{Remark}
\newcommand{\wh}{\widehat}
\newcommand{\wt}{\widetilde}
\renewcommand{\varepsilon}{\epsilon}
\renewcommand{\tilde}{\wt}
\DeclareMathOperator{\LCS}{LCS}
\DeclareMathOperator{\ED}{ED}
\newcommand{\ignore}[1]{}
\newcommand{\Greedy}{\textsc{Greedy}}
\newcommand{\Match}{\textsc{Match}}
\newcommand{\BMatch}{\textsc{BestMatch}}
\newcommand{\Approx}{\textsc{ApproxED}}
\newcommand{\hRB}{\widehat{R_B}}
\newcommand{\hLB}{\widehat{L_B}}
\newcommand{\hRA}{\widehat{R_A}}
\newcommand{\hLA}{\widehat{L_A}}
\newcommand*{\RN}[1]{\expandafter\@slowromancap\romannumeral #1@}
\begin{document}

\begin{titlepage}
  \maketitle
  \begin{abstract}
Given a pair of strings, the problems of computing their Longest Common Subsequence and Edit Distance have been extensively studied for decades. 
For exact algorithms, LCS and Edit Distance (with character insertions and deletions) are equivalent; the state of the art running time is (almost) quadratic and this is tight under plausible fine-grained complexity assumptions. 
But for approximation algorithms the picture is different: there is a long line of works with improved approximation factors for Edit Distance, but for LCS (with binary strings) only a trivial $1/2$-approximation was known.
In this work we give a reduction from approximate LCS to approximate Edit Distance, yielding the first efficient $(1/2+\epsilon)$-approximation algorithm for LCS for some constant $\epsilon>0$.

  \end{abstract}
  \thispagestyle{empty}
\end{titlepage}

\newpage

% !TeX root = main.tex

\section{Introduction}

In this paper we consider two of the most ubiquitous measures of similarity between a pair of strings: the longest common subsequence (LCS) and the edit distance.  
The LCS of two strings $A$ and $B$ is simply their longest (not necessarily contiguous) common substring. 
Edit distance is the minimum number of character insertions, deletions, and substitutions required to transform $A$ to $B$. 
In fact, under a slightly more restricted definition that does not allow substitutions%
\footnote{Since the definitions are equivalent up to a factor of $2$ (each substitution is an insertion and a deletion), this difference is irrelevant as we consider constant factor approximations of edit distance.},
the two measures are complements and the problems of computing them exactly are equivalent.

There is a textbook dynamic programming algorithm for computing LCS (or edit distance) than runs in $O(n^2)$ time, and a slightly faster $O(n^2/\log^2(n))$-time algorithm due to Masek and Paterson~\cite{mp80}. 
Finding faster algorithms is a central and long standing open problem both in theory and in practice (e.g.~Problem 35 of~\cite{k72}).
Under plausible fine-grained complexity assumptions such as SETH, neither problem can be computed much faster~\cite{AWW14-LCS, abw15, BI15, BK15, ahww16}.

For (multiplicative) approximation, the two problems are no longer equivalent. For edit distance, there is a long sequence of approximation algorithms with improving factors~\cite{bjkk04, bes06, Andoni2012, ako10, beghs18}; in particular,~\cite{cdgks18} gives a constant factor approximation in truly sub-quadratic time. For LCS with alphabet size $|\Sigma|$, in contrast, there is a trivial $1/|\Sigma|$-approximation, and no better algorithms are known (for large alphabet there are some hardness of approximation results~\cite{ab17, ar18, cglrr18} and also approximation algorithms with non-trivial polynomial factors~\cite{hsss19, RSSS19}).

In this paper we focus on binary strings, where the trivial algorithm gives a $(1/|\Sigma|=1/2)$-approximation.
Breaking this $1/2$ barrier is a well-known open problem in this area.
Our main result is a fine-grained reduction that implies obtaining a $1/2+\epsilon$-approximation for binary LCS (for some constant $\epsilon>0$) is no harder than approximating edit distance to within some constant factor.
\begin{theorem}[Reduction: approximate $\ED$ implies approximate $\LCS$]\label{thm:reduction}\hfill

Suppose that there exists a constant $c$ and an approximate edit distance algorithm that runs in time $T(n)$ and, given two binary strings $A,B$ of length $n$, returns an estimate $\tilde{\ED}(A,B) \in [\ED(A,B), c\cdot\ED(A,B) + o(n)]$. Then there exists a fixed constant $\epsilon = \epsilon(c) \in (0,1/2)$ and a deterministic approximation algorithm for longest common subsequence that runs in  deterministic $T(n)+O(n)$ and approximates $\LCS(A,B)$ to within a $(1/2+\epsilon)$-approximation factor.
\end{theorem}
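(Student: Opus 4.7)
My plan is to output the maximum of several provable lower bounds on $\LCS(A,B)$. Start with the trivial character-wise bound $T := \max\{\min(z_A,z_B),\,\min(o_A,o_B)\}$, obtained by matching only $0$s or only $1$s. Since $\LCS(A,B) \le \min(z_A,z_B)+\min(o_A,o_B) \le 2T$, this is always a $\tfrac{1}{2}$-approximation, and it beats $\tfrac{1}{2}$ by a constant factor whenever the two ``min-counts'' $\min(z_A,z_B)$ and $\min(o_A,o_B)$ differ appreciably. So the hard case is the \emph{balanced} one, $\min(z_A,z_B)\approx \min(o_A,o_B)=T$.

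Next, using the footnote to convert the oracle to indel-only edit distance (at the cost of absorbing a factor of $2$ into $c$), we have $\ED(A,B)=|A|+|B|-2\LCS(A,B)$. A single oracle call then produces the lower bound
\[
V \ :=\ \frac{|A|+|B|-\widetilde{\ED}(A,B)}{2} \ \le\ \LCS(A,B),
\]
and the guarantee $\widetilde{\ED}\le c\cdot \ED+o(n)$ translates to $V \ge c\cdot\LCS(A,B)-(c-1)\cdot \frac{|A|+|B|}{2}-o(n)$. A short case analysis combining this with $T\ge \LCS(A,B)/2$ shows that $\max\{T,V\} \ge \frac{c}{2c-1}\LCS(A,B)-o(n) = \bigl(\tfrac{1}{2}+\Omega(1/c)\bigr)\LCS(A,B)-o(n)$ as long as $T=\Theta(|A|)$ (equivalently $\LCS(A,B)=\Theta(|A|+|B|)$); the extremal case is the tight balanced configuration $\LCS(A,B)=2T$ with $T$ close to $|A|/2$, where the two bounds meet at ratio exactly $c/(2c-1)$.

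The main obstacle is the remaining ``filler'' regime: balanced counts $T_0\approx T_1=T$ but $T\ll |A|$, so that one string is dominated by $1$s and the other by $0$s and $\LCS(A,B)\approx 2T$ is a vanishing fraction of $|A|+|B|$. Here the additive $(c-1)\cdot\frac{|A|+|B|}{2}$ term swamps $V$ while $T/\LCS$ is tight at exactly $\tfrac{1}{2}$, so neither of the previous bounds beats $\tfrac{1}{2}$ on its own. To handle this case I would make $O(\log n)$ additional oracle calls on substring pairs of the form $(A[1..i],B[1..j])$ together with $(A[i{+}1..n],B[j{+}1..n])$, choosing the cut $(i,j)$ via binary search on the cumulative $0$- and $1$-count sequences of $A$ and $B$ so that each of the two halves individually falls into the favourable regime of the previous paragraph and then adding the two resulting sub-estimates. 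By super-additivity of $T(n)$ the total oracle input stays $O(n)$ and the running time stays within $T(n)+O(n)$. The step I expect to be the crux of the proof is arguing that such a density-balancing cut always exists---and can be located from character counts alone, without any knowledge of the optimal alignment---while each sub-problem remains large enough that the additive $o(\cdot)$ errors are subsumed by the constant approximation factor.
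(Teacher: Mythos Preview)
Your first step (the trivial bound $T$ and the reduction to $\min(z_A,z_B)\approx\min(o_A,o_B)$) is correct and matches the paper's first lemma. The error is in the second step. You correctly compute that when $T\approx n/2$ the two bounds $T\ge\LCS/2$ and $V\ge c\,\LCS-(c-1)n$ cross at ratio $c/(2c-1)$, but this is \emph{not} the worst case over all $T=\Theta(n)$: the guaranteed ratio $\min_{\LCS}\max\{T,V\}/\LCS$ equals $cT/(T+(c-1)n)$, which decreases monotonically as $T$ drops from $n/2$ and hits exactly $1/2$ already at $T=\frac{c-1}{2c-1}\,n$. Concretely, with $c=2$, $A=0^{n/3}1^{2n/3}$, $B=0^{2n/3}1^{n/3}$ one has $T_0=T_1=T=n/3$, $\LCS=2n/3$, indel distance $2n/3$, and a $2$-approximate oracle may return $\widetilde{\ED}=4n/3$, whence $V=n/3$; so $\max\{T,V\}=\LCS/2$ on the nose. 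The correct version of your second step is the paper's balanced-case lemma, which applies only when $0(A)\in[1/2\pm\beta']$ for $\beta'$ small depending on $c$, not for arbitrary $T=\Theta(n)$.

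Consequently the regime left to your third step is not ``$T\ll n$'' but all of $\alpha:=T\le\bigl(\tfrac12-\Omega_c(1)\bigr)n$, which for large $c$ is essentially the entire range. Your binary-search sketch does not cover it: a cut $(i,j)$ chosen from count statistics need not be close to any cut respected by the optimal alignment, so $\LCS(A[1..i],B[1..j])+\LCS(A[i{+}1..n],B[j{+}1..n])$ may fall well short of $\LCS(A,B)$; and $O(\log n)$ oracle calls on $\Theta(n)$-sized inputs costs $\Theta(T(n)\log n)$, not the claimed $T(n)+O(n)$. The paper handles this regime differently: it takes a \emph{fixed} three-way split $L\cup M\cup R$ with $|L|=|R|=\alpha$ on each string and does a six-way case analysis on the $0/1$-densities of $L_A,R_A,L_B,R_B$. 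Five of the six cases are resolved purely by \textsc{Match}/\textsc{BestMatch}/\textsc{Greedy} combinations on these blocks (no oracle at all); only in the one sub-case where $(R_A,R_B)$ themselves turn out to be near-balanced does the algorithm invoke a single oracle call, on that length-$\alpha$ pair, which is exactly where your balanced argument legitimately applies --- at scale $\alpha$ rather than $n$.
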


\begin{remark*}
We state the above theorem in terms of estimating the edit distance or length of the LCS. If the edit distance can efficiently compute the transformation (this assumption is almost wlog by~\cite{CGKK18}), then our algorithm can also efficiently compute the common string. 
\end{remark*}

As mentioned above, the recent breakthrough of~\cite{cdgks18} gives a constant factor approximation of edit distance in truly-subquadratic ($\tilde{O}(n^{2-2/7})$) time. By plugging their algorithm into our reduction, we would obtain $(1/2+\epsilon)$-approximation algorithm for binary LCS with the same running time. 
By applying our reduction to the even more recent %(and to-date unpublished)
 approximation algorithms for edit distance%
\footnote{The near-linear time approximation algorithms for edit distance~\cite{KS19, BR19} also incur a sublinear additive error term, but that is OK for our reduction.} that run in near-linear time~\cite{KS19, BR19}, we obtain the following stronger corollary:

\begin{corollary}[Approximate LCS]
For every constant $\delta>0$ there exists a constant $\epsilon>0$ such that, given two binary strings $A,B \in \{0,1\}^n$, there is an algorithm that runs in $O(n^{1+\delta})$ time and 
$\LCS(A,B)$ to within a $(1/2+\epsilon)$-factor.
\end{corollary}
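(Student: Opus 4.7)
The plan is to derive the corollary as a nearly immediate consequence of Theorem~\ref{thm:reduction}, by instantiating its black-box edit-distance oracle with the recent near-linear time approximate edit-distance algorithms. Concretely, for every $\delta>0$, the algorithms of Kouck\'y--Saks~\cite{KS19} and Brakensiek--Rubinstein~\cite{BR19} provide a deterministic procedure that runs in time $T(n)=O(n^{1+\delta})$ and returns an estimate $\tilde{\ED}(A,B)\in[\ED(A,B),\,c_\delta\cdot\ED(A,B)+o(n)]$, where $c_\delta$ is a constant depending only on $\delta$. This matches exactly the oracle format required by Theorem~\ref{thm:reduction}; the sublinear additive error is admissible, as explicitly noted in the footnote preceding the corollary.

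Feeding this oracle into Theorem~\ref{thm:reduction} produces, in deterministic time $T(n)+O(n)=O(n^{1+\delta})$, a $(1/2+\epsilon)$-approximation to $\LCS(A,B)$, where $\epsilon=\epsilon(c_\delta)>0$ is the constant produced by the reduction. Because $\delta$ is fixed and $c_\delta$ is in turn a fixed constant, $\epsilon$ is a fixed positive constant depending only on $\delta$, which is precisely the quantification the corollary demands. Note that the binary alphabet hypothesis is inherited from Theorem~\ref{thm:reduction} and is consistent with the corollary's assumption $A,B\in\{0,1\}^n$.

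The only point that requires a little care --- and the closest thing to a genuine obstacle --- is to confirm that the cited edit-distance algorithms actually output a one-sided estimate of the shape $[\ED,\,c\cdot\ED+o(n)]$ demanded by Theorem~\ref{thm:reduction}, rather than, say, a two-sided multiplicative window around the true value. Whenever they do not, one can re-center the output by scaling (and enlarging the additive slack by a constant factor) to recover a one-sided oracle at the cost of only a constant-factor blow-up in $c$; since the reduction needs only $c=O(1)$, this adjustment is harmless and preserves the $O(n^{1+\delta})$ running time. With this routine bookkeeping, the corollary follows by direct composition of the two ingredients.
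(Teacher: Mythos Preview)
Your proposal is correct and follows essentially the same approach as the paper: the corollary is stated immediately after Theorem~\ref{thm:reduction} with the remark that plugging in the near-linear time edit-distance algorithms of~\cite{KS19, BR19} yields the result, which is exactly the composition you carry out. One small caveat: you assert that the cited edit-distance algorithms are deterministic, which is not obviously the case (and the corollary as stated does not claim determinism), so you may want to drop that adjective.
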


\subsubsection*{Technical preview}
The crux of our algorithm is analyzing first order statistics (counts of $0$s and $1$s) of the input strings ($A,B$) and their substrings.  We begin with a few simple observations. Below, we normalize $\ED$ and $\LCS$ so that they're always between $0$ and $1$ (as opposed to $0$ and $n$).
\begin{itemize}
\item 
If the strings are balanced, namely have the same number of $0$s and $1$s, we know that $\LCS(A,B) \in [1/2,1]$. If the strings are very close, say $\LCS(A,B) \geq (1-\delta)$ for sufficiently small $\delta>0$, we can use the assumed edit distance algorithm as a black box and find a common substring of length $\geq (1-O(\delta))$.
On the converse if the substring returned by the algorithm is shorter than $(1-O(\delta))$, we know that $\LCS(A,B) < 1-\delta$, and thus returning an all-$1$ string of length $1/2$ is a $(1/2+2\delta)$-approximation.
\item 
If $A$ is balanced and $B$ has e.g.~10\% $0$s and 90\% $1$s, we know that $\LCS(A,B) \in [0.5,0.6]$, so simply returning the all-$1$ string of length $1/2$ is a $5/6$-approximation.
The same holds for most ways in which one or both strings are unbalanced.
\item
However there is one difficult case when the string are {\em perfectly unbalanced}, e.g. $A$ has 99\% $0$s and $B$ has 99\% $1$s. 
Now the first order statistics over the entire strings only tell us that $\LCS(A,B) \in [0.01,0.02]$, so the trivial approximation doesn't beat $1/2$. 
On the other hand, the edit distance is at least $0.98$, so even a $1.1$-approximation algorithm for edit distance wouldn't give us a non-trivial guarantee for this case.
\end{itemize}

Our main technical contribution is a careful analysis of this last case (and its many sub-cases).

% !TeX root = main.tex

\section{Preliminaries}

For strings $x,y \in \{ 0, 1 \}^m$ for $m \leq n$, we use $1(x)$ to denote the number of $1$ in $x$, $0(x)$ to denote the number of $0$ in $x$,
and $\LCS(x,y)$ to denote the length of their longest common subsequence. 
All of these function are normalized w.r.t. the length of the original input to our main algorithm, $n$; in particular we always have $0(x), 1(x), \LCS(x,y) \in [0, m/n]$.

\begin{fact}\label{fact:lcs-ub}
$$\LCS(A,B) \leq \min\{0(A),0(B)\}+\min\{1(A),1(B)\}.$$
\end{fact}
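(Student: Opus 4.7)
The plan is to observe that any common subsequence is a subsequence of each of $A$ and $B$ separately, and then count $0$s and $1$s in it against the corresponding counts in $A$ and $B$.

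Concretely, I would start by letting $s$ be a longest common subsequence of $A$ and $B$, so that (after the normalization by $n$) we have $|s|/n = \LCS(A,B)$. Since $s$ is a binary string, its length decomposes as $|s| = (\text{number of } 0\text{s in } s) + (\text{number of } 1\text{s in } s)$, and dividing by $n$ gives $\LCS(A,B) = 0(s) + 1(s)$ in our normalized notation.

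Next I would argue the two bounds separately. Because $s$ is a subsequence of $A$, every $0$ in $s$ is matched to a distinct $0$ in $A$, so $0(s) \leq 0(A)$; likewise, because $s$ is a subsequence of $B$, we get $0(s) \leq 0(B)$. Combining, $0(s) \leq \min\{0(A),0(B)\}$. The symmetric argument for $1$s gives $1(s) \leq \min\{1(A),1(B)\}$.

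Adding the two inequalities yields $\LCS(A,B) = 0(s) + 1(s) \leq \min\{0(A),0(B)\} + \min\{1(A),1(B)\}$, which is the claim. There is no real obstacle here; the only thing to be a bit careful about is the normalization convention so that the inequality is stated in the correct units, but since both sides are divided by the same $n$ this is automatic.
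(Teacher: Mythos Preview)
Your argument is correct and is exactly the natural one: decompose a longest common subsequence into its $0$s and $1$s and bound each count by the corresponding minimum in $A$ and $B$. The paper in fact states this as a \emph{Fact} without proof, so there is nothing further to compare.
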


\subsubsection*{Parameters $\alpha, \beta, \gamma, \delta$}
In the proof we consider the following parameters:
\begin{description}
\item[$\alpha$] We define $\alpha := \min\{1(A),1(B),0(A),0(B)\}$. Notice that $\alpha$ may be very small, and even approaching $0$ as a function of $n$.
We assume wlog that this minimum is attained by $1(A) = \alpha$.
\item[$\beta$] The parameter $\beta$ will represent a robustness parameter for some of our bounds. We take $\beta = \Theta(\alpha)$, but it may be smaller by an arbitrary constant factor.
\item[$\gamma$] The parameter $\gamma \in (0,1)$ is a constant that depends on the approximation factor $c$ of the approximation algorithm for edit distance that we assume. We choose $\beta$ sufficiently small such that $\gamma \alpha \gg \beta$.
\item[$\delta$] The parameter $\delta$ represents the deviation from ``perfectly unbalanced'' case (see Lemma~\ref{lem:unbalanced}). It is an arbitrary small constant. In particular, $\delta \alpha \ll \beta$. It is sufficiently small that for succinctness of representation we'll simply omit it (as if it were zero) after Lemma~\ref{lem:unbalanced}.
\end{description}

\subsubsection*{Subroutines}

Our reduction will assume the availability of an algorithm \Approx~which takes as input two strings $A,B$ of length $n$ and outputs $1-\tilde{\ED}(A,B)$ where $\tilde{\ED}(A,B) \in [\ED(A,B), c\cdot\ED(A,B) + o(1)]$.

In addition, we also define three trivial algorithms; they all run in time linear in length of input string.
\begin{definition}[\Match]\label{fac:match}
Given input string $A$ and $B$, and a symbol $\sigma \in \Sigma$. The algorithm $\Match(A,B,\sigma)$ will output a string $C$ where every character is $\sigma$ and the length of $C$ is $\min\{\sigma(A),\sigma(B)\}$. This algorithm takes $O(|A| + |B|)$ time.
\end{definition}

\begin{definition}[\BMatch]
Given input string $A$ and $B$. The algorithm $\BMatch(A,B)$ will take the longest one of $\Match(A,B,0)$ and $\Match(A,B,1)$. This algorithm also takes $O(|A| + |B|)$ time.
\end{definition}

\begin{definition}[\Greedy]
Given input string $A_1, A_2$ and $B$. The algorithm $\Greedy(A_1,A_2,B)$ will find the optimal contiguous partition $B=B_1\cup B_2$ so as to maximize $\BMatch(A_1,B_1) + \BMatch(A_2,B_2)$. This algorithm also takes $O(|A| + |B|)$ time.
\end{definition}

Below, we slightly abuse notation and refer to the above algorithms (\Approx, \Match, \BMatch, \Greedy) both when we want their output to be the actual common string, and the length. Which output we need will be clear from context.

\section{Reducing to perfectly unbalanced case}
In this section we formalize the intuition from the introduction that \BMatch~gives a better-than-$1/2$-approximation unless $1(A) \approx 0(B)$.

\begin{lemma}[Reduction to perfectly unbalanced case]\label{lem:unbalanced}\hfill

If $|1(A)-0(B)| > \delta\min\{0(A),0(B),1(A),1(B)\}$, then 
$$\BMatch(A,B) \geq (1/2+\delta/2)\LCS(A,B).$$
\end{lemma}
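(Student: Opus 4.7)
The plan is to set up the first-order statistics, extract from the hypothesis a quantitative lower bound on $\min\{0(A),0(B)\}$ relative to $\alpha$, and then directly compare $\BMatch$ to the upper bound on $\LCS$ coming from Fact~\ref{fact:lcs-ub}.

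First I would invoke the WLOG assumption that $\alpha = 1(A)$ (the minimality is preserved by swapping roles of $A,B$ and of $0,1$). Since the normalization gives $0(A)+1(A)=1$ and $0(B)+1(B)=1$, minimality of $\alpha=1(A)$ forces $\alpha\leq 1/2$, $0(A)=1-\alpha$, and $0(B),1(B)\in[\alpha,1-\alpha]$. In particular $0(B)\geq\alpha$, so the hypothesis $|1(A)-0(B)|>\delta\alpha$ simplifies to $0(B)>(1+\delta)\alpha$.

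Next I would argue that the same inequality also passes to $0(A)$. Since $0(B)\leq 1-\alpha$, the bound $0(B)>(1+\delta)\alpha$ forces $\alpha<1/(2+\delta)$; otherwise the hypothesis is vacuous and the lemma is trivial. Therefore $0(A)=1-\alpha>1-1/(2+\delta)=(1+\delta)/(2+\delta)>(1+\delta)\alpha$. Combining both estimates gives
\[
\min\{0(A),0(B)\}\;\geq\;(1+\delta)\,\alpha.
\]

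With this in hand, the definitions of $\BMatch$ and $\LCS$ close the argument. Because $\min\{1(A),1(B)\}=\alpha\leq\min\{0(A),0(B)\}$, we have $\BMatch(A,B)=\min\{0(A),0(B)\}$, and Fact~\ref{fact:lcs-ub} gives $\LCS(A,B)\leq\min\{0(A),0(B)\}+\alpha$. Writing $t:=\min\{0(A),0(B)\}/\alpha\geq 1+\delta$ and using monotonicity of $t\mapsto t/(t+1)$,
\[
\frac{\BMatch(A,B)}{\LCS(A,B)}\;\geq\;\frac{t}{t+1}\;\geq\;\frac{1+\delta}{2+\delta}\;=\;\frac12+\frac{\delta}{2(2+\delta)},
\]
which is $1/2+\Omega(\delta)$ and yields the claimed lower bound $(1/2+\delta/2)\LCS(A,B)$ after absorbing an appropriate constant into $\delta$ (the parameter is introduced as an arbitrary small constant, so the slack $\delta/(2(2+\delta))$ versus $\delta/2$ is inconsequential).

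The only real obstacle is bookkeeping around the case where $\alpha$ is close to $1/2$: I would handle it by the observation above that the hypothesis itself is vacuous in that regime (since then $0(B)$ cannot exceed $(1+\delta)\alpha$), which is what makes the lower bound on $0(A)$ available at the same time as the one on $0(B)$. Everything else is a direct computation from Fact~\ref{fact:lcs-ub} and the definition of $\BMatch$.
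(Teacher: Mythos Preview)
Your proof is correct and takes essentially the same approach as the paper: bound $\LCS$ via Fact~\ref{fact:lcs-ub}, identify $\BMatch(A,B)=\min\{0(A),0(B)\}$, and compare. The paper's version is marginally more direct: from $1(A)\leq 1(B)$ it obtains $0(A)\geq 0(B)$, hence $\BMatch(A,B)=\min\{0(A),0(B)\}=0(B)>(1+\delta)\alpha$ in one step, avoiding your case-split on whether $\alpha$ is close to $1/2$; both routes land on the identical ratio $(1+\delta)/(2+\delta)$ and then absorb the gap to the stated $1/2+\delta/2$ into the arbitrary constant $\delta$.
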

\begin{proof}
Assume wlog%
\footnote{This is wlog since $|1(A)-0(B)| = |0(A)-1(B)|$, so the premise of this lemma is symmetric.} that $1(A) = \min\{0(A),0(B),1(A),1(B)\}$. 
Then we have,
\begin{align*}
\BMatch(A,B) & = \Match(A,B,0) \\
& = \min\{0(B),0(A)\}\\
& = 0(B) && \text{(By assumption $1(A) \leq 1(B)$)}\\
& > (1+\delta)1(A)  && \text{(By premise of lemma)}\\
& = (1+\delta)\min\{1(A),1(B)\}\\
& \geq (1+\delta)\big(\LCS(A,B) - \min\{0(A),0(B)\}\big) && \text{(Fact~\ref{fact:lcs-ub})}\\
&=(1+\delta)\big(\LCS(A,B) - \BMatch(A,B)\big) .
\end{align*}
\end{proof}

We henceforth assume wlog that 
\begin{gather}\label{eq:unbalanced}
|0(A)-1(B)| \leq \delta\min\{0(A),0(B),1(A),1(B)\}.
\end{gather}

For ease of presentation, we henceforth omit $\delta$ from our calculations, i.e. we'll assume that $\delta = 0$. It will be evident that modifying any of our inequalities by factors in $[\pm \delta \alpha]$ will not affect the proofs.

Eq.~\eqref{eq:unbalanced} is very important in our analysis, but it does not rule out the perfectly balanced case, namely $0(A)\approx0(B)\approx1(A)\approx1(B)\approx1/2$.

\begin{lemma}[Ruling out the perfectly balanced case]\label{lem:balanced}\hfill
Let $\beta', \gamma > 0$ be sufficiently small constants.
If $0(A) \in [1/2 \pm \beta']$, then 
\begin{gather}\label{eq:reduction}
\max\{\BMatch(A,B),\Approx(A,B)\} \geq (1/2+\gamma)\LCS(A,B).
\end{gather}
\end{lemma}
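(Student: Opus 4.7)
The plan is to combine the hypothesis $0(A) \in [1/2 \pm \beta']$ with Eq.~\eqref{eq:unbalanced} to conclude that all four counts $0(A), 1(A), 0(B), 1(B)$ lie in $[1/2 \pm O(\beta')]$, so both strings are nearly balanced. Consequently $\BMatch(A,B) \geq 1/2 - O(\beta')$, which is already very close to $1/2$ regardless of $\LCS(A,B)$. The proof then proceeds by a dichotomy on whether $\LCS(A,B)$ is bounded away from $1$ or is almost maximal, controlled by a small threshold $\eta$ (chosen after $\gamma$).

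In the first case, $\LCS(A,B) \leq 1 - \eta$, we use $\BMatch$ alone. The ratio
$$\BMatch(A,B)/\LCS(A,B) \geq (1/2 - O(\beta'))/(1 - \eta),$$
which exceeds $1/2 + \gamma$ as soon as $\eta \gtrsim 2\gamma + O(\beta')$. This is just a short arithmetic check.

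In the second case, $\LCS(A,B) > 1 - \eta$, the strings are close in edit distance. Using the elementary bound $\ED(A,B) \leq 2(1 - \LCS(A,B)) < 2\eta$ for equal-length strings (pure insertions and deletions realize this, and allowing substitutions only helps), the approximation subroutine returns
$$\Approx(A,B) = 1 - \tilde{\ED}(A,B) \geq 1 - c \cdot \ED(A,B) - o(1) \geq 1 - 2c\eta - o(1),$$
which exceeds $1/2 + \gamma \geq (1/2 + \gamma)\LCS(A,B)$ whenever $\eta < (1/2 - \gamma)/(2c)$, absorbing the $o(1)$ term for $n$ large (using that $\LCS \geq 1/2 - O(\beta')$ is bounded away from $0$).

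Finally I would choose $\gamma$ and $\beta'$ both sufficiently small in terms of $c$ (say, both at most a small constant multiple of $1/c$) and pick $\eta$ in the resulting non-empty interval $[2\gamma + O(\beta'),\, (1/2 - \gamma)/(2c))$. The only genuine obstacle is to verify that the two constant constraints on $\eta$ are simultaneously satisfiable; the substantive content of the lemma is the observation that near-balance of both strings makes $\BMatch$ automatically $\approx 1/2$, so the only regime in which $\BMatch$ fails to beat the $1/2$ barrier by a factor $\gamma$ is when $\LCS$ is essentially $1$, which is precisely the regime in which the approximate edit distance algorithm delivers a nearly perfect common substring.
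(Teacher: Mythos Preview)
Your proposal is correct and follows essentially the same approach as the paper. The only cosmetic difference is that you introduce an explicit threshold $\eta$ and split into two cases on $\LCS(A,B)$, whereas the paper argues directly by contrapositive: it assumes $\BMatch(A,B) < (1/2+\gamma)\LCS(A,B)$, deduces $\LCS(A,B) > 1 - 2(\beta'+\gamma)$ (your Case~2 with $\eta = 2(\beta'+\gamma)$), and then bounds $\Approx(A,B)$ exactly as you do.
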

\begin{proof}
By Eq.~\eqref{eq:unbalanced}, the premise implies that $1(B) \in [1/2 \pm \beta']$, and by symmetry also $1(A), 0(B) \in [1/2 \pm \beta']$.
Therefore, $\BMatch(A,B) \geq 1/2 - \beta'$.

Suppose that $\BMatch(A,B)$ isn't big enough to satisfy Eq.~\eqref{eq:reduction} (otherwise we're done). Then,
\begin{gather*}\LCS(A,B) > 2\BMatch(A,B) - 2\gamma \geq 2(1/2 - (\beta' + \gamma)) = 1-2(\beta'+\gamma).
\end{gather*}
Thus also 
\begin{gather*}\ED(A,B) = 1-\LCS(A,B) \leq 2(\beta' + \gamma).
\end{gather*}
Therefore, by its approximation guarantee, we have that 
\begin{gather*}\Approx(A,B) \geq 1 - c \cdot \ED(A,B)-o(1) \geq 1-2c(\beta' + \gamma)  -o(1)\geq 1/2+\gamma.
\end{gather*}
(The latter inequality follows by choosing $\beta'$ and $\gamma$ sufficiently small.)
\end{proof}

Setting $\beta'= 10\beta$, we henceforth assume wlog that 
\begin{gather}\label{eq:not-balanced}
0(A),1(A),0(B),1(B) \notin [1/2 \pm 10\beta].
\end{gather}

\section{Perfectly unbalanced strings}

In this section we build on our assumptions from Eq.~\eqref{eq:unbalanced} and Eq.~\eqref{eq:not-balanced} from the previous section to complete the proof of our reduction.

Recall that we define $\alpha:=1(A) < 1/2$, and by Eq.~\eqref{eq:unbalanced}, we also have $0(B) = \alpha$.
We partition each string into three contiguous substrings, where the extreme left and right substring are each of length $\alpha$:
\begin{align*}
A & = L_A \cup M_A \cup R_A\\
B & = L_B \cup M_B \cup R_B
\end{align*}
\begin{gather*}
|L_A| = |R_A| = |L_B| = |R_B| = \alpha
\end{gather*}
\begin{gather*}
|M_A| = |M_B| = 1-2\alpha.
\end{gather*}

\begin{figure}[!h]
\centering
	\includegraphics[width = 0.9\textwidth]{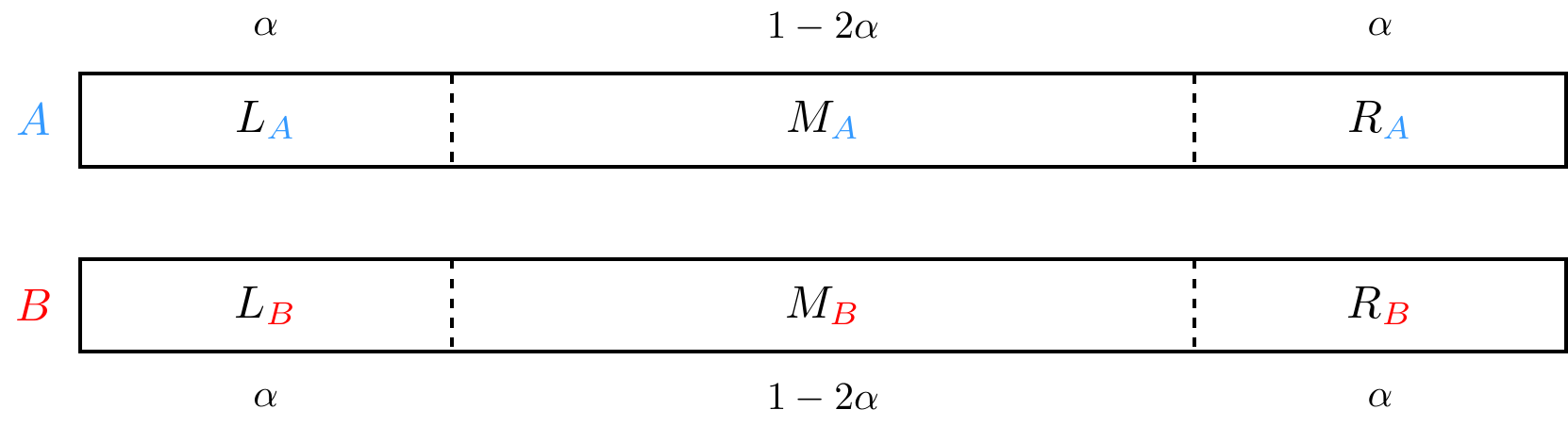}
	\caption{}\label{fig:case0}
\end{figure}

We consider six cases for the proportions of $1$'s and $0$'s in $R_A, L_A, R_B, L_B$ as in Eq.~\eqref{eq:LCS_hack_summarize_six_cases}. By Table~\ref{tab:LCS_hack_summarize_six_cases}, we know that those six cases cover all the possibilities.

Our six cases can be summarized in the following equation,
\begin{align}\label{eq:LCS_hack_summarize_six_cases}
\begin{cases}
1(R_B) \leq \alpha/2 + 2\beta , 0(R_A) \leq \alpha /2 + 2\beta & \text{~Case~1} \\
1(L_B) \leq \alpha/2 + 2\beta , 0(L_A) \leq \alpha /2 + 2\beta & \text{~Case~2}  \\
1(R_B) \leq \alpha/2 + \beta , 1(L_B) \leq \alpha /2 + \beta, 0(L_A) > \alpha /2 + 2\beta, 0(R_A) > \alpha /2 + 2\beta & \text{~Case~3}  \\
1(R_B) > \alpha/2 + 2\beta, 1(L_B) > \alpha/2 + 2\beta, 0(L_A) \leq \alpha /2 + \beta, 0(R_A) \leq \alpha /2 + \beta & \text{~Case~4}  \\
1(R_B) > \alpha/2 + \beta, 0(L_A) > \alpha/2 + \beta & \text{~Case~5}  \\
1(L_B) > \alpha/2 + \beta, 0(R_A) > \alpha/2 + \beta & \text{~Case~6} 
\end{cases}
\end{align}

\begin{table}[!h]
\begin{center}\caption{Fill all the six cases in Eq.~\eqref{eq:LCS_hack_summarize_six_cases} into the whole space. Note that $1+2+3$ means the combination of $1$, $2$ and $3$ covers it. $5,6$ means any one of them covers it.}\label{tab:LCS_hack_summarize_six_cases}
    \begin{tabular}{| p{2.9cm}| p{2.9cm} | p{2.9cm} | p{2.9cm} | p{2.9cm} |}
    \hline
     & $0(R_A)\leq \alpha/2+ \beta$, $0(L_A) \leq \alpha/2+ \beta$ & $0(R_A)\leq \alpha/2+ \beta$, $0(L_A) > \alpha/2+ \beta$ & $0(R_A) > \alpha/2+ \beta$, $0(L_A) \leq \alpha/2+ \beta$ & $0(R_A) > \alpha/2+\beta$, $0(L_A) > \alpha/2+\beta$ \\ \hline
    $1(R_B)\leq \alpha/2+ \beta$, $1(L_B) \leq \alpha/2+ \beta$  & 1,2 & 1 & 2 & 1+2+3 \\ \hline
    $1(R_B)\leq \alpha/2+\beta$, $1(L_B) > \alpha/2+ \beta$     & 1 & 1 & 6 & 6  \\ \hline
    $1(R_B) > \alpha/2+ \beta$, $1(L_B) \leq \alpha/2+\beta$    & 2 & 5 & 2 & 5  \\ \hline
    $1(R_B) > \alpha/2+\beta$, $1(L_B) > \alpha/2+\beta$       & 1+2+4 & 5 &  6 & 5,6 \\ \hline
    \end{tabular}
\end{center}
\end{table}

\subsection*{Case 1: $1(R_B) \leq \alpha /2 + 2\beta$, $0(R_A) \leq \alpha /2 + 2\beta$}

We split this case into three sub-cases, as follows:
\begin{align}\label{eq:LCS_hack_summarize_six_cases_case_1}
\begin{cases}
1(R_B) \in [\alpha/2 \pm 4\beta], ~ 0(R_A) \in [\alpha /2 \pm 4\beta ] & \text{~Case~1(a)}\\
1(R_B) < \alpha/2 - 4\beta, ~ 0(R_A)  \leq \alpha /2 + 2\beta & \text{~Case~1(b)} \\
1(R_B) \leq \alpha/2 + 2\beta, ~ 0(R_A) < \alpha /2 - 4\beta & \text{~Case~1(c)}\end{cases}
\end{align}

\subsubsection*{Case 1(a): $1(R_B) \in [ \alpha /2 \pm 4\beta ]$, $0(R_A) = [ \alpha /2 \pm 4\beta]$}

At a high level, we want to split the original problem into two subproblems: 
\begin{description}
\item[left-middle] $(L_A \cup M_A , L_B \cup M_B  )$;
\item[right] $(R_A,R_B)$. 
\end{description}
Running \BMatch~on the left-middle subproblem gives a $(1/2)$-approx; the right subproblem is (approximately) balanced so Lemma~\ref{lem:balanced} (i.e. taking the better of \BMatch~and \Approx) gives better-than-$1/2$.

The visualization of this case is presented in Figure~\ref{fig:case_1a}.

We first want to upper bound $\LCS(A,B)$ as roughly the sum of LCSs of the two subproblems, but in general this may not be the case. 
Fix an optimal matching $\mu$ corresponding to a longest common substring between $A$ and $B$. Assume wlog (by symmetry) that $\mu(R_A) \subseteq R_B$, i.e.~the LCS does not match any $R_A$ characters with characters from $L_B \cup M_B$. $\mu$ induces a new partition of $B$ into two%
\footnote{Note that we do not define a $\widehat{M_B}$.} 
contiguous substrings $\hLB \cup \hRB$ such that $\mu(R_A) \subseteq \hRB \subseteq R_B$.
By optimality of $\mu$, we have
\begin{gather}\label{eq:1a-1}\LCS(A,B) = \LCS(L_A \cup M_A, \hLB) + \LCS(R_A, \hRB).
\end{gather}
Applying Fact~\ref{fact:lcs-ub} to both terms on the RHS, we have 
\begin{align}\label{eq:1a-2}\LCS(A,B) \leq & \underbrace{\min\{1(L_A \cup M_A),1(\hLB)\} + \min\{0(L_A \cup M_A),0(\hLB)\}}_{=X} \nonumber\\
&+ \underbrace{\min\{1(R_A),1(\hRB)\}+\min\{0(R_A),0(\hRB)\}}_{=Y}.
\end{align}
We henceforth denote the left and right contributions to the bound on the LCS by $X$ and $Y$ respectively. (So $\LCS(A,B)\leq X+Y$.)
We also define:
$$ Z := \max\big\{\min\{1(L_A \cup M_A),1(\hLB)\}, \min\{0(L_A \cup M_A),0(\hLB)\}\big\}.$$
(Observe that $Z \geq X/2$.)

We now prove a lower bound on the LCS that our algorithm can find.

\begin{align}\label{eq:1a-3}
  \Greedy(L_A\cup M_A,R_A,B) 
 \geq  & \max\big\{\min\{1(L_A \cup M_A),1(\hLB)\},\min\{0(L_A \cup M_A),0(\hLB)\}\big\}  \nonumber\\
& \;\; + \max\big\{\min\{1(R_A),1(\hRB)\},\min\{0(R_A),0(\hRB)\}\big\} \nonumber\\
 \geq  & Z + Y/2.
\end{align}

We break into sub-cases, depending on the value of $Z$.

\paragraph{Case 1(a-i): $Z > \alpha/2+10\beta$}
In this case, observe that 
\begin{align*}
X - Z = & \min\{1(L_A \cup M_A),1(\hLB),0(L_A \cup M_A),0(\hLB)\} \\
\leq & 1(L_A \cup M_A) \\
= & \alpha - 1(R_A) && \text{($\alpha = 1(A)$)}\\
\leq & \alpha/2+4\beta && \text{(Case 1(a) assumption)}\\
< & Z-6\beta && \text{(Case 1(a-i) assumption)}
\end{align*}
Therefore, $Z > X/2 + 3\beta$. Combining with Eq.~\eqref{eq:1a-2} and~\eqref{eq:1a-3}, we have that
\begin{gather*}
  \Greedy(L_A\cup M_A,R_A,B)  \geq Z+Y/2 > X/2 + Y/2+3\beta \geq \LCS(A,B)/2+3\beta.
\end{gather*}

\paragraph{Case 1(a-ii): $Z \leq \alpha/2+10\beta$}

By Eq.~\eqref{eq:1a-1}, we have 
\begin{align}\label{eq:1a-4}
\LCS(A,B) \leq &  X+ \LCS(R_A, \wh{R_B}) \nonumber\\
\leq & 2Z+\LCS(R_A, \wh{R_B}) & \text{($Z \geq X / 2$)} \nonumber\\
\leq & \alpha + 20\beta +\LCS(R_A, \wh{R_B}) & \text{(Case 1(a-ii) assumption)}  \nonumber\\
\leq & \alpha + 20\beta + \LCS(R_A,R_B) . & \text{($\wh{R_B} \subseteq R_B$)}
\end{align}
For our purposes, this is effectively as good as bounding $\LCS(A,B)$ by the sum of LCSs of the left-middle and right subproblems.

We run \BMatch~on the left-middle subproblem. We have that 
\begin{align*}%\label{eq:1a-Greedy} 
\BMatch(L_A \cup M_A , L_B \cup M_B  ) 
\geq & ~ \min\{0(L_A \cup M_A),0(L_B \cup M_B)\}  \\
\geq & ~ \alpha/2- 4 \beta && \text{(Case 1(a) assumption)}.
\end{align*}

We run  \BMatch~and \Approx~on $R_A,R_B$ and take the better of the two outcomes. We apply Lemma~\ref{lem:balanced} to strings $R_A,R_B$ with $\beta'=4\beta/\alpha$. (Notice that by Case 1(a) assumption, they are guaranteed to be approximately balanced to within $\pm4\beta$, or a relative $\pm4\beta/\alpha$.) We therefore have that
\begin{align*}%\label{eq:1a-ED} 
\max\{\BMatch&(R_A,R_B), \Approx(R_A,R_B) \}	 \\
\geq & ~ (1/2+\gamma)\LCS(R_A,R_B) && \text{(Lemma~\ref{lem:balanced})}  \\
\geq & ~ \LCS(R_A,R_B)/2 + \gamma\alpha/2 - O(\beta \gamma)  && \text{($\LCS(R_A,R_B)\geq \alpha/2-O(\beta)$)} \\
\geq & ~ \LCS(R_A,R_B)/2 + \gamma\alpha/2 - O(\beta). && \text{($\gamma \le 1$)}
\end{align*}

So in total, our algorithm finds a common substring of length at least 
\begin{align*}
\big(\alpha + \LCS(R_A,R_B)\big)/2 +\gamma\alpha - O(\beta) 
\geq & ~ \LCS(A,B)/2 +\gamma\alpha/2 - O(\beta) && \text{(Eq.~\eqref{eq:1a-4})}\\
\geq & ~ \LCS(A,B)/2 + \frac{2}{6}\gamma\alpha &&  \text{($\gamma \alpha \gg \beta$)} \\
\geq & ~ (1/2+\gamma/6)\LCS(A,B). && \text{($\LCS(A,B) \leq 2\alpha$)}
\end{align*}

\begin{figure}[!h]
\centering
	\subfloat[Partition $(\wh{L_B},\wh{R_B})$ created by $\Greedy(L_A \cup M_A, R_A, B)$ ]{\includegraphics[width = 0.95\textwidth]{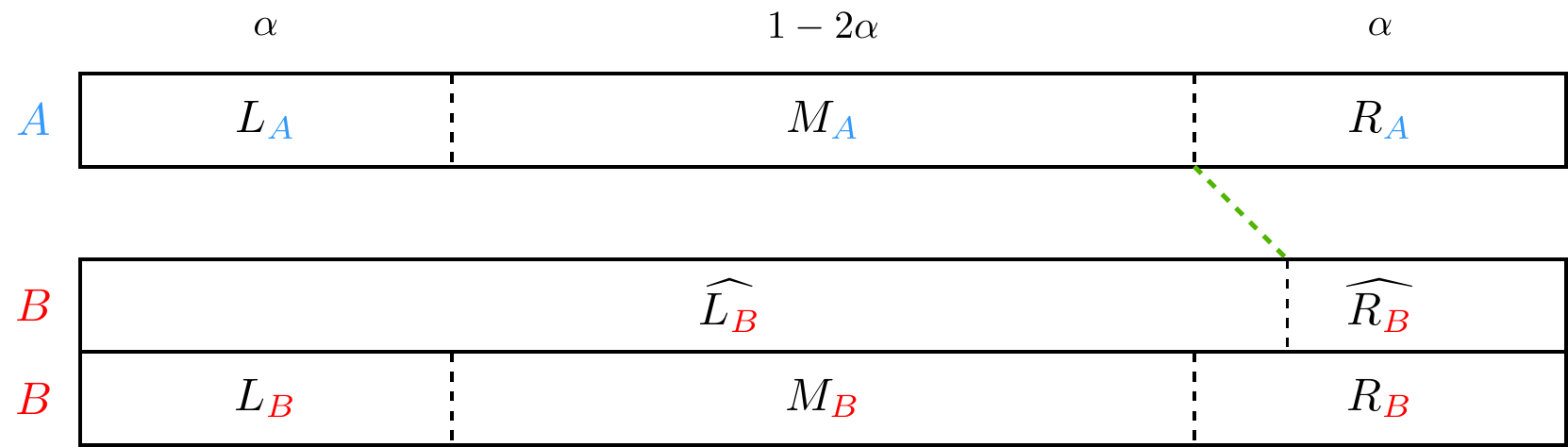}}\\
	\subfloat[$\BMatch(L_A \cup M_A, L_B \cup M_B)$ and $\max( \BMatch(R_A,R_B), \Approx(R_A,R_B) )$]{\includegraphics[width = 0.95\textwidth]{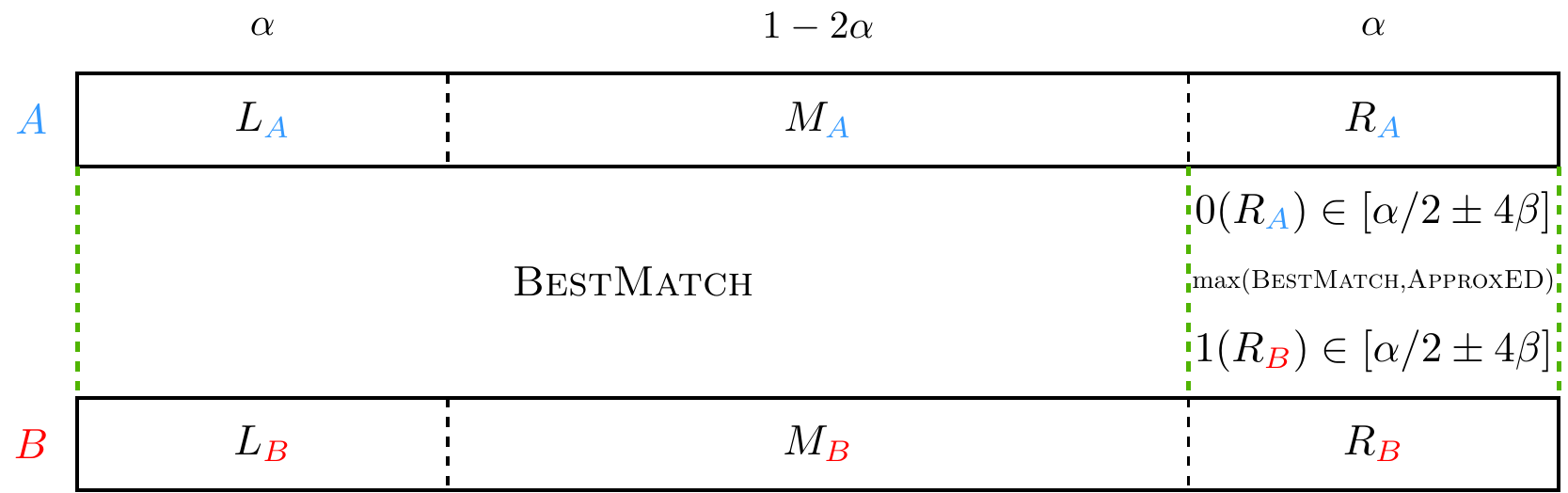}}
	\caption{Visualization of Case 1(a) which is $0(R_A) \in [\alpha/2\pm 4\beta]$ and $1(R_B) \in [\alpha/2\pm 4\beta]$. If $0(\wh{L_B}) > \alpha/2 + 10\beta$, we use $\Greedy$ result. If $0(\wh{L_B}) \leq \alpha/2 + 10\beta$, we use the result $\BMatch$+ $ \max ( \BMatch, \Approx )$.}\label{fig:case_1a}
\end{figure}

\subsubsection*{Case 1(b): $1(R_B) < \alpha /2 -4\beta$, $0(R_A) \leq \alpha/2+ 2\beta$}
Fix an optimal matching $\mu$. 
We further split this case into two sub-cases, depending on whether $\mu(R_A) \subseteq R_B$ or $R_B \subseteq \mu(R_A)$. (In Case 1(a) we could assume the former wlog by symmetry. Also notice that in general both may occur simultaneously.)

If $\mu(R_A) \subseteq R_B$, define the partition $\hLB, \hRB$ as in Case 1(a). We have
\begin{align}\label{eq:1d-i}
\LCS(A,B) & = \LCS(L_A \cup M_A, \hLB) + \LCS(R_A, \hRB) \nonumber\\
&\leq \underbrace{ 1(L_A \cup M_A) }_{ \leq \alpha/2 + 2\beta } + \underbrace{0(\hLB) + 0(\hRB)}_{\leq \alpha} + \underbrace{1(\hRB)}_{\leq \alpha/2-4\beta} && \text{(Fact~\ref{fact:lcs-ub})}\nonumber\\
& \leq 2\alpha -2\beta.
\end{align}

Similarly, if $\mu(R_A) \supseteq R_B$, we can define an analogous partition of $A$ into $\hLA, \hRA$:
\begin{align}\label{eq:1d-ii}
\LCS(A,B) & = \LCS(\hLA,L_B \cup M_B) + \LCS(\hRA, R_B) \nonumber\\
&\leq \underbrace{0(L_B \cup M_B)}_{\leq \alpha/2-4\beta} + \underbrace{1(\hLA) + 1(\hRA)}_{\leq \alpha} + \underbrace{1(R_B)}_{\leq \alpha/2-4\beta} && \text{(Fact~\ref{fact:lcs-ub})}\nonumber\\
& \leq 2\alpha -8\beta.
\end{align}

Either way, we have that $\LCS(A,B) \leq 2\alpha -2\beta$; therefore $\Match(A,B,0) = \alpha$ guarantees a better-than-$1/2$-approximation.

\subsubsection*{Case 1(c): $1(R_B) \leq \alpha /2 +2\beta$, $0(R_A) < \alpha/2-4\beta$}
Follows analogously to Case 1(b).

\subsection*{Case 2:  $1(L_B) \leq \alpha /2 + \beta $, $0(L_A) \leq \alpha /2 + \beta$} 
We reverse the order of string $A$ and $B$, then the proof is the same as Case 1.

\subsection*{Case 3: $1(R_B)\leq \alpha/2 + \beta$, $1(L_B) \leq \alpha/2 + \beta $, $0(L_A) > \alpha/2 + 2\beta$ and $0(R_A) > \alpha /2 + 2\beta$}
We visualize this case in Figure~\ref{fig:cases_3}.

We show that simple applications of \Match~to the left, middle, and right substrings can guarantee a common string of at least $\alpha+2\beta \geq \LCS(A,B)/2+2\beta$.

For the middle substrings, observe that $1(M_A) = 1(A) - 1(R_A) - 1(L_A) > 4\beta$. Also by Eq.~\eqref{eq:not-balanced}, $1(M_B) \geq 8\beta$. Therefore,
\begin{gather}\label{eq:3-m}
\Match(M_A,M_B,1) = \min\{1(M_A), 1(M_B)\} \geq 4\beta.
\end{gather}
For the left substrings, observe that $0(L_B) = |L_B| - 1(L_B) > \alpha/2-\beta$. Therefore, 
\begin{gather}\label{eq:3-l}
\Match(L_A,L_B,0) = \min\{0(L_A), 0(L_B)\} \geq \alpha/2-\beta.
\end{gather}
Similarly, 
\begin{gather}\label{eq:3-r}
\Match(R_A,R_B,0) = \min\{0(R_A), 0(R_B)\} \geq \alpha/2-\beta.
\end{gather}

Summing up Eq.~\eqref{eq:3-m},\eqref{eq:3-l},\eqref{eq:3-r}, our algorithm obtains a common string of length at least $\alpha+2\beta$.

\subsection*{Case 4: $1(R_B) > \alpha/2 + 2\beta$, $1(L_B) > \alpha/2 + 2\beta $, $0(L_A) \leq \alpha/2 + \beta$ and $0(R_A) \leq \alpha /2 + \beta$}

We visualize this case in Figure~\ref{fig:cases_4}.

If we switch $A$ and $B$, then the proof is the same as Case 3.

\subsubsection*{Case 5: $1(R_B) > \alpha/2 + \beta$, and $0(L_A) > \alpha/2 + \beta$}
We visualize this case in Figure~\ref{fig:cases_5}.

We apply \Match~to two subproblems to obtain a common substring of length greater than $\alpha+2\beta \geq \LCS(A,B)/2+2\beta$.

Observe that $0(L_B\cup M_B) = 0(B) -0(R_B) > \alpha/2+\beta$.  
\begin{gather*}
\Match(L_A, L_B\cup M_B, 0) = \min\{0(L_A),0(L_B\cup M_B)\} > \alpha_/2+\beta.
\end{gather*}
By an analogous argument, 
\begin{gather*}
\Match(M_A \cup R_A, R_B, 1) = \min\{1(M_A \cup R_A), 1(R_B)\} > \alpha/2+\beta.
\end{gather*}

\subsubsection*{Case 6: $1(L_B) > \alpha/2 + \beta$, and $0(R_A) > \alpha/2 + \beta$}
We visualize this case in Figure~\ref{fig:cases_6}.

We reverse the oder of string $A$ and $B$, then the proof is the same as Case 5.

\begin{figure}[th]
\centering
	\subfloat[]{\includegraphics[width = 0.9\textwidth]{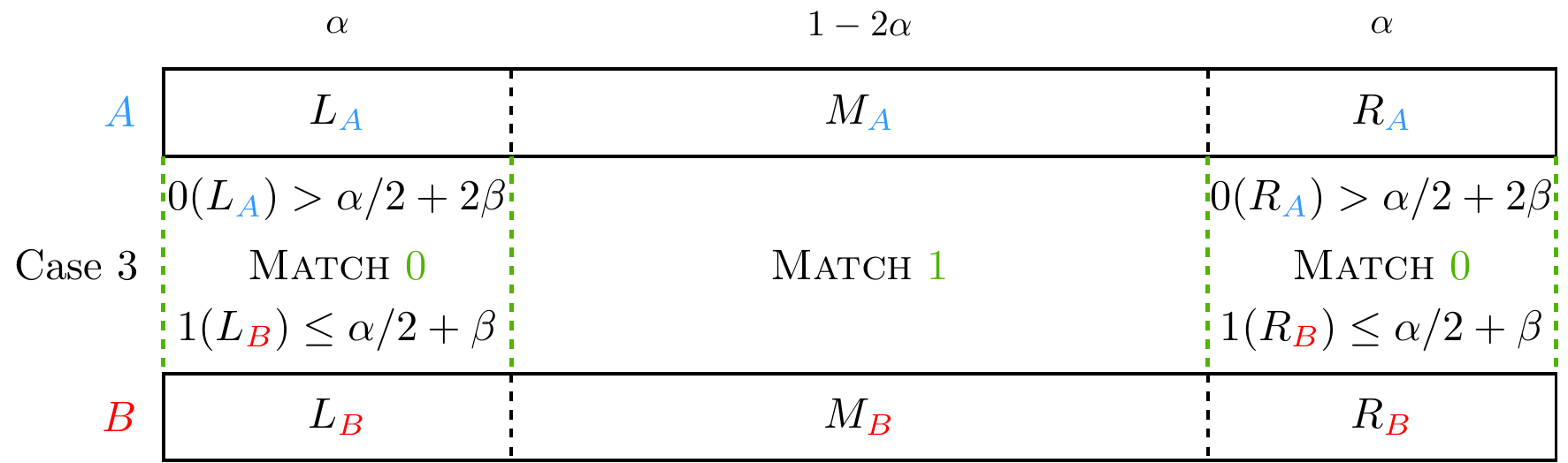}\label{fig:cases_3}}\\
	\vspace{1mm}
	\subfloat[]{\includegraphics[width = 0.9\textwidth]{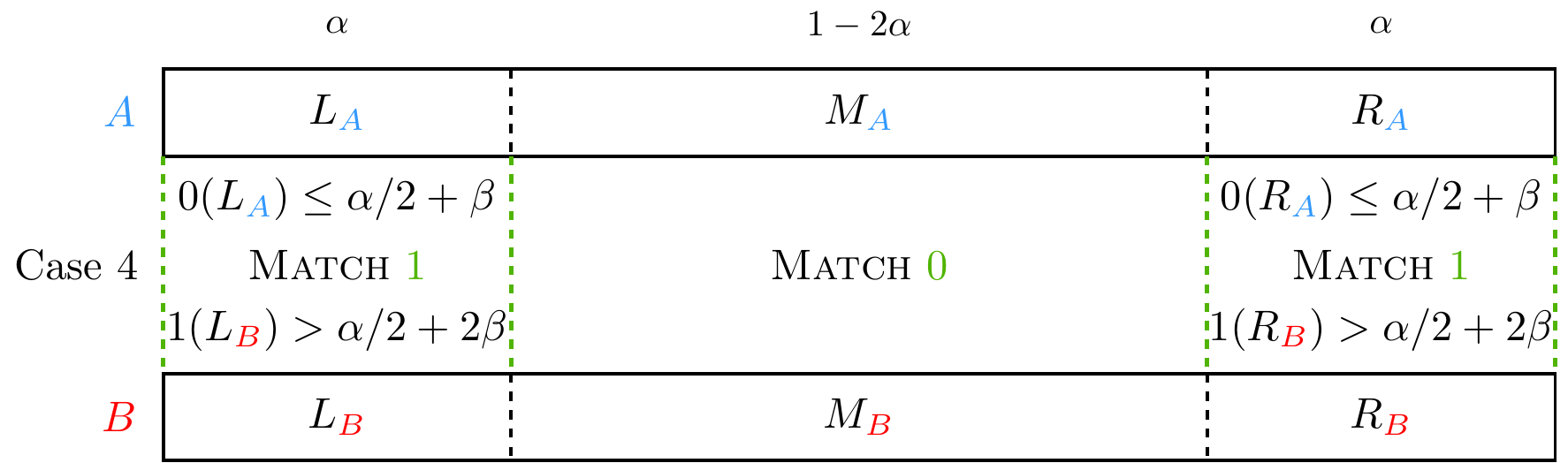}\label{fig:cases_4}}\\
	\vspace{1mm}
	\subfloat[]{\includegraphics[width = 0.9\textwidth]{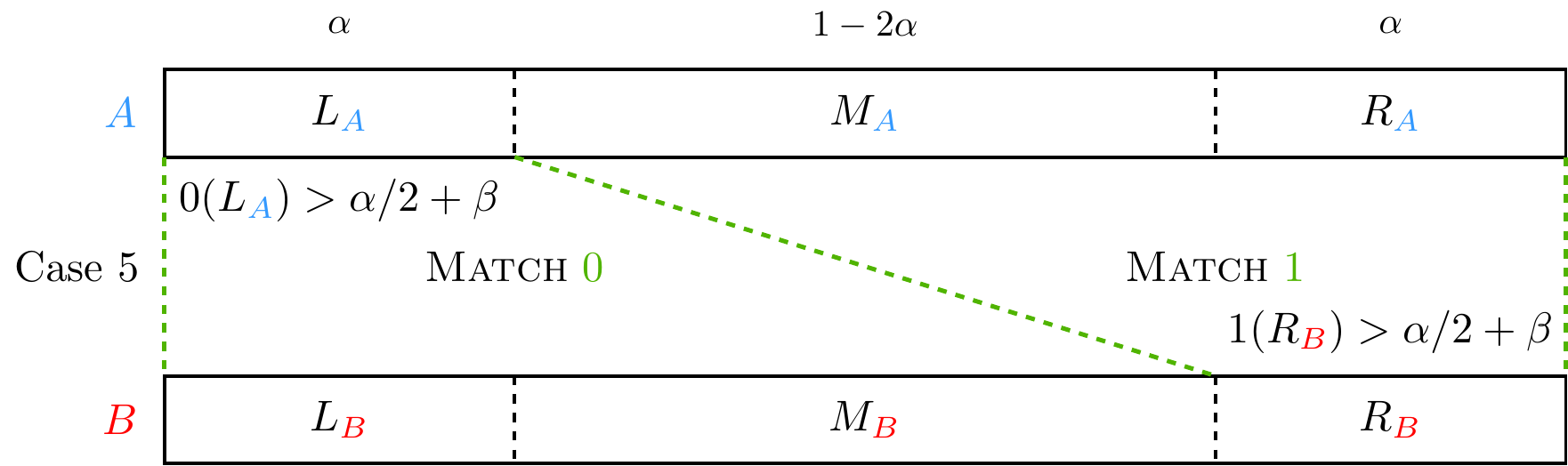}\label{fig:cases_5}}\\
	\vspace{1mm}
	\subfloat[]{\includegraphics[width = 0.9\textwidth]{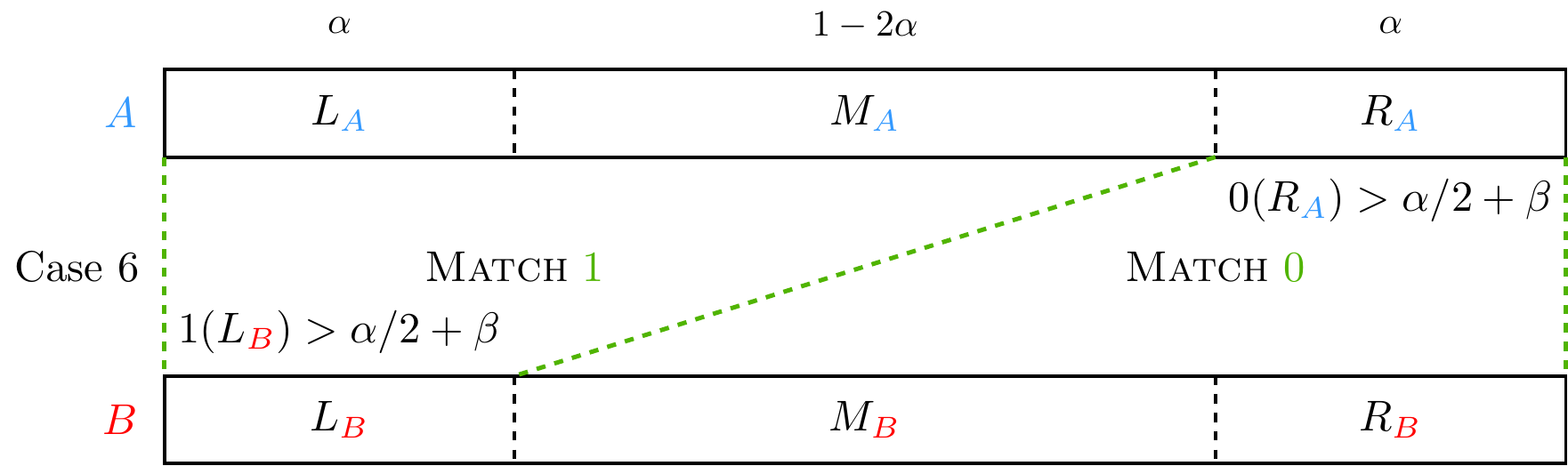}\label{fig:cases_6}}\\
	\caption{Case 3-6.}\label{fig:cases}
\end{figure}

\begin{algorithm}[th]
\begin{algorithmic}[1]\caption{Approximate LCS algorithm}\label{alg:ApproxLCS} 
\Procedure{\textsc{ApproxLCS}}{$A,B,\alpha$}
	\State Split $A$ into three parts, $L_A$, $M_A$ and $R_A$ such that $|L_A| = |R_A| = \alpha$, similarly for $B$
	\State Choose $\beta$ to be sufficiently small constant
	\If{$1(R_B) \leq \alpha/2 + 2\beta$ $0(R_A) \leq \alpha/2 + 2\beta $} \Comment{Case 1}
		\If{$1(R_B) \in [ \alpha / 2 \pm 4\beta]$, $0(R_A) \in [\alpha/2 \pm 4\beta]$} \Comment{Case 1(a)}
			\State $C, \hLB,\hRB \leftarrow \Greedy(L_A \cup M_A, R_A, B)$
      \State $Z \leftarrow \max \{ \min \{ 1 ( L_A \cup M_A ) , 1 (\wh{L_B}) \} , \min \{ 0 (L_A \cup M_A) , 0 ( \wh{L_B} ) \} \}$
			\If{$ Z \leq \alpha / 2 + 10 \beta $}
				\State $C \leftarrow \BMatch(L_A \cup M_A,L_B \cup M_B)$
        \State \hspace{8mm} $ + \max \{ \BMatch(R_A,R_B) , \Approx(R_A,R_B) \}$
			\EndIf
		\ElsIf{$1(R_B) < \alpha / 2 - 4\beta $, $0(R_A) \leq \alpha/2 + 2\beta$} \Comment{Case 1(b)}
			\State $C \leftarrow \Match(A,B,0)$ 
		\ElsIf{$1(R_B) \leq \alpha/2 + 2\beta$, $0(R_A) < \alpha/2- 4\beta$} \Comment{Case 1(c)}
			\State Similar to Case 1(b) 
		\EndIf
	\ElsIf{$1(L_B) \leq \alpha/2 + \beta$ $0(L_A) \leq \alpha/2 + \beta$} \Comment{Case 2}
		\State Similar to Case 1
	\ElsIf{$1(R_B) , 1(L_B) \leq \alpha/2 + \beta$, $0(L_A) , 0(R_A) > \alpha/2 + 2\beta$} \Comment{Case 3}
		\State $C \leftarrow \Match(L_A,L_B,0) + \Match(M_A,M_B,1) + \Match(R_A,R_B,0)$
	\ElsIf{$1(R_B), 1(L_B) > \alpha/2 + 2\beta$, $0(L_A) , 0(R_A) \leq \alpha/2 + \beta$} \Comment{Case 4}
		\State Similar to Case 3
	\ElsIf{$1(R_B) > \alpha/2 + \beta$, $0(L_A) > \alpha/2 + \beta$ } \Comment{Case 5}
		\State $C\leftarrow \Match(L_A, L_B \cup M_B, 0) + \Match(M_A \cup R_A, R_B , 1)$
	\ElsIf{$1(L_B) > \alpha/2 + \beta$, $0(R_A) > \alpha/2 + \beta$} \Comment{Case 6}
		\State $C\leftarrow \Match(L_A \cup M_A, L_B, 1) + \Match(R_A, M_B \cup R_B, 0)$
	\EndIf
	\State \Return $C$
\EndProcedure
\end{algorithmic}
\end{algorithm}

\bibliographystyle{alpha}
\bibliography{ref}

\end{document}